\documentclass[conference]{IEEEtran}
\IEEEoverridecommandlockouts
\usepackage{cite}
\usepackage{amsmath,amssymb,amsfonts}
\usepackage{graphicx}
\usepackage{textcomp}
\usepackage{xcolor}
\usepackage{fdsymbol}
\usepackage{url}
\usepackage{xcolor}
\usepackage{graphicx}
\usepackage[top=0.75in,left=0.64in,bottom=1in,right=0.62in,textwidth=7.26in]{geometry}
\columnsep 0.26in
\usepackage[ruled,linesnumbered]{algorithm2e}

\newtheorem{theorem}{Theorem}
\newtheorem{definition}{Definition}
\newtheorem{proof}{Proof}

\newtheorem{assumption}{Assumption}

\def\BibTeX{{\rm B\kern-.05em{\sc i\kern-.025em b}\kern-.08em
    T\kern-.1667em\lower.7ex\hbox{E}\kern-.125emX}}

\begin{document}

\title{Matching Using Sufficient Dimension Reduction for Heterogeneity Causal Effect Estimation\\
\thanks{\#These authors contributed equally to this work}
}

\author{
    \IEEEauthorblockN{Haoran~Zhao$^{1\#}$, Yinghao~Zhang$^{1\#}$, Debo~Cheng$^{2,3*}$, Chen Li$^{4}$, Zaiwen Feng$^{5,6,7,8,9,10*}$}
    \IEEEauthorblockA{$^1$ College of Science, Huazhong Agricultural University, Wuhan, 430070, China}
    \IEEEauthorblockA{$^2$ College of Computer Science and Engineering, Guangxi Normal University, Guilin, 541000, China}
	\IEEEauthorblockA{$^3$ STEM, University of South Australia, Australia}
	\IEEEauthorblockA{$^4$ Department of Computer Science and Systems Engineering, Kyushu Institute of Technology, Iizuka 820-8502, Japan}
	\IEEEauthorblockA{$^5$ College of Informatics, Huazhong Agricultural University, Wuhan 430070, China}
 \IEEEauthorblockA{$^6$ Hubei Key Laboratory of Agricultural Bioinformatics, Huazhong Agricultural University, Wuhan, 430070, China}
   \IEEEauthorblockA{$^7$ Hubei Hongshan Laboratory, Huazhong Agricultural University, Wuhan 430070, China}
     \IEEEauthorblockA{$^8$ National Key Laboratory of Crop Genetic Improvement, Huazhong Agricultural University, Wuhan, 430070, China}
    \IEEEauthorblockA{$^9$ State Key Laboratory of Hybrid Rice, Wuhan University, 299 Bayi Rd, Wuhan, 430070, China}
    \IEEEauthorblockA{$^{10}$ Macro Agricultural Research Institute, Huazhong Agricultural University, Wuhan 430070, China}

		*Correspondence: Debo~Cheng (chengdb2016@gmail.com) and Zaiwen Feng (Zaiwen.Feng@mail.hzau.edu.cn)
		
}

\maketitle
	
		\begin{abstract}
		Causal inference plays an important role in understanding the underlying mechanisation of the data generation process across various domains. It is challenging to estimate the average causal effect and individual causal effects from observational data with high-dimensional covariates due to the curse of dimension and the problem of data sufficiency. The existing matching methods can not effectively estimate individual causal effect or solve the problem of dimension curse in causal inference. To address this challenge, in this work, we prove that the reduced set by sufficient dimension reduction (SDR) is a balance score for confounding adjustment.  Under the theorem, we propose to use an SDR method to obtain a reduced representation set of the original covariates and then the reduced set is used for the matching method. In detail, a non-parametric model is used to learn such a reduced set and to avoid model specification errors. The experimental results on real-world datasets show that the proposed method outperforms the compared matching methods. Moreover, we conduct an experiment analysis and the results demonstrate that the reduced representation is enough to balance the imbalance between the treatment group and control group individuals. 
	\end{abstract}
	
	\begin{IEEEkeywords}
		Matching, Sufficient Dimension Reduction, Causal Inference, Individual Causal Effect.
	\end{IEEEkeywords}
	
	\section{Introduction}
	
	In recent decades, causal inference has gained increasing attention across many areas, such as economic~\cite{imbens2015causal}, health~\cite{connors1996outcomes}, statistic~\cite{ghosh2020sufficient} and computer science~\cite{pearl2009causality}. Causal effect estimation plays an important role in revealing the causal strength between two factors in causal inference. A randomised control experiment is regarded as the standard gold for identifying such causal strength. However, the randomised control experiment is impractical due to the cost, time, or ethic~\cite{imbens2015causal,pearl2009causality}. Therefore, estimating causal effects from observational data is an important alternative in causal inference~\cite{nabi2017semiparametric,li2018mapreduce,ghosh2020sufficient,cheng2020causal,cheng2022toward}. 
	
	Confounding bias is the main challenge in the causal effect estimations from observational data~\cite{cattaneo2010efficient,ghosh2020sufficient}. The commonly used approach is confounding adjustment to remove the confounding bias when estimating the causal effects from observational data~\cite{imbens2015causal}. One of the most popular confounding adjustments is the matching method~\cite{stuart2010matching}. The core idea of the matching method is to balance the distribution of covariates between the treatment group and the control group. In detail, the first step of matching is to seek units from the control group with similar covariates to those in the treatment group for constructing a matching pair, and then the causal effect can be calculated based on the matched data. Matching can be performed by selecting different functions of the covariates and selecting different matching algorithms such as Mahalanobis distance matching, full matching, nearest neighbour matching, and genetic matching~\cite{stuart2010matching}. 
	
	The most widely used function is propensity scores which are regarded as the probability of a unit receiving a treatment~\cite{rubin1979using,rosenbaum1983central}. In general, the matching method involves transforming multi-dimensional covariates into scalars by using propensity scores so as to overcome the difficulties in matching based on original covariates~\cite{rubin1973matching,rubin2007design}. In fact, the true propensity score is unknown in real applications, and thus the estimation of the propensity score from data is required. However, the errors of the model or model misspecified inevitably occur in the calculation of the propensity score from data \cite{hahn1998role,ghosh2020sufficient,greenewald2021high}. Hence, the propensity score is not a good solution for the matching method~\cite{luo2017estimating}.
	
	Recently, sufficient dimension reduction has been successfully utilised for estimating causal effects from observational data~\cite{ghosh2020sufficient,cheng2022sufficient,nabi2017semiparametric}. For instance, Luo et al. proposed a sufficient dimension reduction matching for causal effect estimation~\cite{luo2019matching}. The proposed matching method utilises a sufficient dimension reduction method to reduce the original covariates into reduced-dimensional covariates that retain the advantages of both the original covariates and the propensity score under mild assumptions. The advantage of the reduced-dimensional covariates is the asymptotic stability and is superior to that of the estimates obtained by using propensity scores. However, this method is to obtain two sets of reduced reduced-dimensional covariates by using sufficient dimension reduction on sub-datasets, i.e. the treated samples and the control samples, but not discovering an adjustment set over the whole data. 
	
	In a data, if the number of samples is not big enough, the performance of the proposed method will be decreased significantly. Nabi et al.~\cite{nabi2017semiparametric} proposed a semi-parametric causal sufficient dimension reduction method to deal with multiple treatments. Cheng et al.~\cite{cheng2022sufficient} proposed a CESD matching method for the average causal effect estimation by using a kernel dimension reduction method~\cite{fukumizu2004dimensionality} to reduce the covariates relative to the treatment variable, but not for heterogeneity causal effect estimation.

	In this work, we propose a novel \underline{M}atching method based on \underline{I}nverse \underline{R}egression \underline{E}stimator (referred to as MIRE method) for average and heterogeneity causal effect estimation from observational data. In detail, our MIRE method utilises a sufficient dimension reduction method, i.e. the inverse regression estimator, to learn reduced-dimensional covariates relative to the outcome variable over the whole data. Then, the MIRE method utilises the reduced-dimensional covariates to conduct a matching process for imputing the unobserved outcomes (a.k.a counterfactual outcomes)~\cite{imbens2015causal,ghosh2020sufficient,cheng2022sufficient}. Our experimental analysis shows that the reduced-dimensional covariates are well-balanced which is why MIRE addresses the confounding bias very well.

	To summarise, our work makes the following contributions.
	\begin{itemize}
		\item We tackle the problem of estimating heterogeneity causal effect estimation from observational data with sufficient dimension reduction. 
		\item We propose a novel matching method based on the inverse regression estimator, MIRE, for causal effect estimation from observational data.
		\item Extensive experiments demonstrate that our proposed matching method is more effective in terms of the causal effect estimation from observational data.
	\end{itemize}

	\section{background}
	
	\subsection{Potential Outcome Model}
	
	Let $T$ to be a binary treatment that includes $T_{i} = 0$ (a controlled unit) and $T_{i} = 1$ (a treated unit), where $i$ represents an unit. The set of units which are not assigned to a certain treatment is the control group, and the set of units which are assigned to a certain treatment is the treatment group. The set $X$ is a set of pretreatment covariates, i.e. the covariates are unchanged before and after that the treatment and outcome variables are observed~\cite{imbens2015causal,cheng2022sufficient}. The potential outcomes $Y_{i}$ is defined as the outcome of unit $i$, i.e. $Y_{i}(0)$ and $Y_{i}(1)$ are the potential outcomes of unit $i$ unassigned and assigned to a treatment. Note that both potential outcomes for a unit $i$ can not be observed at the same time. It belongs to a fundamental challenging problem in causal inference~\cite{rubin1974estimating,imbens2015causal}. Under the potential outcome model, the individual causal effect and average causal effects can be defined as follows.
	
	\begin{equation}
		ITE=Y_{i}(1)-Y_{i}(0),
		\label{(1)}  
	\end{equation}
	\begin{equation}
		ATE=E[Y(1)-Y(0)].
		\label{(2)}  
	\end{equation}
	
	The following assumptions are usually required when we utilise the potential outcome model to estimate the causal effects from observational data.
	
	\begin{assumption}[Stable Unit Treatment Value Assumption (SUTVA)~\cite{imbens2015causal}]
		The potential outcome of a unit is not affected by whether or not other units are treated. That means, for each unit, its potential outcomes only rely on the treatment $T$, and there is no different form or version of each treatment level.
	\end{assumption}
	
	\begin{assumption}[Unconfoundedness~\cite{imbens2015causal}]
		Conditioning on the set of covariate $X$, the treatment $T$ was independent of the potential outcomes $Y$, formally $Y(1), Y(0) \Vbar T|X$.
	\end{assumption}
	
	This assumption shows that all units with the set of covariates $X$ are randomly assigned to treatment.
	
	\begin{assumption}[Overlap~\cite{imbens2015causal}] 
		For each unit, it has a non-zero probability to being treated or control when given the set of covariate $X$, i.e.
		$0<P(T=t|X)<1, t=0, 1$.
	\end{assumption}
	
	This assumption shows that there is a probability that each unit is assigned to a treatment $t$. Note that the assumptions of Unconfoundedness and Overlap are usually called ``the ignorability assumption''. The ignorability assumption is not testable directly from data since the counterfactual outcomes are unmeasured~\cite{imbens2015causal}. Consequently, the set of covariates $X$ consists of all relevant and irrelevant covariates in terms of estimating the causal effect of $T$ on $Y$. Hence, it is necessary to discover an adjustment set $Z$ from $X$ to accurately estimate the causal effect of $T$ on $Y$. The propensity score always plays a role in causal effect estimation from observational data and is defined below.

	\begin{definition}[Propensity score~\cite{imbens2015causal}]
		The propensity score is defined as the conditional probability of a unit being assigned to a treatment conditioning on the set of covariates $X$.
		\begin{equation}
			e(x)=P(T=1|X)
			\label{(3)}
		\end{equation}
	\end{definition}
	
	The balance score denoted as $b(x)$ is proposed by Rosenbaum and Rubin~\cite{rosenbaum1983central} that allows a class of functions to model the covariate $X$.  In practice, the balance score should be satisfied the unconfoundedness assumption as well (Lemma 12.2 in~\cite{imbens2015causal}). 
	\begin{equation}
		Y(1), Y(0)\Vbar T|b(x)
		\label{(4)}
	\end{equation}
	
	The balance scores contain the propensity scores and the original covariates, and others. Ignorability assumption also suggests that units with the same or approximately equal balance scores have the same distribution of covariates.
	
	\subsection{Matching Method}
	The matching method is to identify the units in the control group with a similar distribution of covariates to the units in the treatment group, so that the potential outcomes of the units in the control group are used to impute the missing potential outcomes of units in the treatment group. The essential idea of the matching method is to simulate the process of randomised control experiments. Thus the matched units can be regarded as the counterfactual outcome of units~\cite{rubin1973matching,rubin1974estimating,rubin2007design}. In a matching method, the potential outcome of the $i$-th unit can be obtained according to the formulas \ref{(5)} and \ref{(6)}.
	
	\begin{equation}
		\label{(5)}
		\hat{Y}_{i}(1)=\left\{\begin{array}{ll}
			Y_{i},  & \text { if } t=1 \\
			\frac{1}{|\ell(i)|} \sum_{{j} \in \ell(i)} Y_{j}, & \text { if } t=0\\
		\end{array}\right. 
	\end{equation}
    
	\begin{equation}
		\label{(6)}
		\hat{Y}_{i}(0)=\left\{\begin{array}{ll}
			\frac{1}{|\ell(i)|} \sum_{{j} \in \ell(i)} Y_{j}, & \text { if } t=1\\
			Y_{i},  & \text { if } t=0 \\
		\end{array}\right.
	\end{equation}
    
	\noindent where $\ell$ denotes that the sample set matches the $i$-th unit from the control or treatment group.
	
	In general, the distribution of covariates in the matched data is more similar between two groups than before matching. Therefore, the matched results can be used to calculate the average causal effect so as to reduce the influence of confounding factors.

	\section{The Proposed MIRE Method}
	In this section, we first prove that $\psi(X) = X^{T} \beta$ by sufficient dimension reduction (SDR) is a balance score for addressing the confounding bias when estimating the causal effects from observational data. Then we introduce our proposed MIRE method (Matching based on the inverse regression estimator) for causal effect estimation.
	
	\subsection{A Central DRS is a Sufficient Balance Score}
	Sufficient dimension reduction (SDR) is a dimension reduction method which is widely used for data processing~\cite{cook1996graphics,cook2009regression,fukumizu2004dimensionality}. For a response variable $Y$ and a set of covariates $X$, SDR is to learn a function $\psi(X)$ such that the original covariates $X$ can be reduced into a subspace $X^{T}\beta \in \mathbb{R}^{p\times k}$ with $k\ll p$. In this work, we assume that the function $\psi(X) = X^{T}\beta$ is existed and can be expressed as follows. 
	\begin{equation}
		\label{(7)}
		Y\Vbar X|X^{T} \beta
	\end{equation}
	
	\noindent where the column subspace of $\beta$ is called the dimension reduction space (DRS). Hence, it is important to learn the subspace $\beta$ in the SDR method.  It is worth noting that  $X^{T} \beta$ obtained by using an SDR method to reduce the dimension of the covariate $X$ and can be viewed as a function of $\psi(X)$.  When a subspace $S_{Y|Z}$ is the intersection of all other dimension reduction subspaces, the subspace $S_{Y|Z}$ is well-known as the central DRS~\cite{cook1996graphics,fukumizu2004dimensionality}. The central DRS has the smallest dimension and unique dimension-reduction subspace~\cite{connors1996effectiveness}. Thus, in our work, we would like to learn the central DRS $S_{Y|Z}$. 
	
	\begin{theorem}
		\label{theo:001}
		Given an observational data $O$ that contains the treatment $T$, the outcome $Y$, and the set of the pretreatment variables $X$. Suppose that the central subspace $S_{Y|Z}$ is existing and with $r$-dimensional, where $r\ll p$. Then there is an arbitrary basis matrix $\beta \in \mathbb{R}^{p\times k}$ such that $X^{T} \beta$ satisfies $Y(1), Y(0)\Vbar T|X^{T} \beta$ and is a balancing score.
	\end{theorem}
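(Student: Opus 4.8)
The plan is to show that $X^{T}\beta$ inherits exactly the conditional-independence property that the excerpt (Eq.~\ref{(4)}, via Lemma~12.2 of~\cite{imbens2015causal}) attributes to a balancing score, namely $Y(1),Y(0)\Vbar T\mid X^{T}\beta$; everything else is book-keeping. First I would pin down the geometry. Since the central subspace $S_{Y|Z}$ is contained in \emph{every} dimension-reduction subspace, any basis matrix $\beta\in\mathbb{R}^{p\times k}$ whose column space contains $S_{Y|Z}$ still satisfies the sufficiency relation $Y\Vbar X\mid X^{T}\beta$ of Eq.~\ref{(7)} --- this is what licenses the theorem's reference to an essentially arbitrary basis rather than a canonical one, and it also absorbs the mild gap between the dimension $r$ of $S_{Y|Z}$ and the number of columns $k$ of $\beta$.

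Next I would transfer that sufficiency from the observed outcome to the two potential outcomes. Under Assumption~1 (SUTVA/consistency) and Assumption~2, the law of $Y$ given $T=t,X$ equals the law of $Y(t)$ given $X$, so a dimension reduction that is sufficient for $Y$ within arm $t$ is sufficient for $Y(t)$; taking the column space of $\beta$ large enough to contain the (arm-specific) central subspaces, this yields $Y(1),Y(0)\Vbar X\mid X^{T}\beta$.

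The core of the argument is then a short computation with $e(X)=P(T=1\mid X)$. Because $e(X)$ is a measurable function of $X$, the relation $Y(1),Y(0)\Vbar X\mid X^{T}\beta$ gives $e(X)\Vbar\bigl(Y(0),Y(1)\bigr)\mid X^{T}\beta$. Conditioning on the finer $\sigma$-field $\sigma(X)$ first and then invoking Assumption~2,
\begin{equation}
P\bigl(T=1\mid Y(0),Y(1),X^{T}\beta\bigr)=E\bigl[\,e(X)\mid Y(0),Y(1),X^{T}\beta\,\bigr]=E\bigl[\,e(X)\mid X^{T}\beta\,\bigr],
\end{equation}
which no longer involves $\bigl(Y(0),Y(1)\bigr)$; running the identical argument for $T=0$ establishes $Y(1),Y(0)\Vbar T\mid X^{T}\beta$, so $X^{T}\beta$ is a balancing score, which is the claim.

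I expect the genuine obstacle to be the transfer step of the second paragraph: the inverse regression estimator is fitted on the pooled data, where the conditional law of the observed $Y$ given $X$ is the $e(X)$-weighted mixture of the two arm-specific laws, so it is not automatic that a subspace sufficient for the pooled regression is sufficient for each $Y(t)$ separately --- this is precisely where Assumption~2 is indispensable, together with the fact that the central subspace of a mixture contains those of its components. A secondary, purely technical matter is the measure-theoretic handling of the conditional expectations above (existence of regular conditional distributions, working with $\sigma(X^{T}\beta)$ rather than pointwise densities), which is routine. Note that the Overlap assumption (Assumption~3) is not used in this identification step and only enters later when the matched estimator is analysed.
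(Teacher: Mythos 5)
Your route is genuinely different from the paper's, and in its final step it is the more rigorous of the two. The paper's own proof never runs an iterated-expectation argument: it invokes the invariance property of the central subspace to get $Y \Vbar X \mid X^{T}\beta$, asserts ``$X^{T}\beta \cong X$'', and then simply substitutes $X^{T}\beta$ for $X$ in the unconfoundedness statement $Y(1),Y(0) \Vbar T \mid X$. That substitution is precisely the step that requires proof --- conditional independence is not preserved when the conditioning variable is replaced by a coarser function of itself; if it were, the Rosenbaum--Rubin balancing-score theorem would be vacuous. Your third paragraph supplies the missing argument correctly: from $(Y(0),Y(1)) \Vbar X \mid X^{T}\beta$ and Assumption~2, the tower property gives $P(T=1\mid Y(0),Y(1),X^{T}\beta) = E[e(X)\mid X^{T}\beta]$, which is free of $(Y(0),Y(1))$, hence the balancing property in the sense of Eq.~\ref{(4)}. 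This is the standard Rosenbaum--Rubin/prognostic-score computation, and it turns the paper's gesture into an actual derivation.

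However, the obstacle you flag in your last paragraph is a genuine gap --- in your write-up, where it is at least acknowledged, and in the paper's, where it is not. The theorem hypothesizes the central subspace of the \emph{observed}, pooled regression of $Y$ on $X$ (this is what the IRE fitted on the whole data estimates), whereas your computation needs the \emph{joint} sufficiency $(Y(0),Y(1)) \Vbar X \mid X^{T}\beta$. The observed conditional law is the $e(X)$-weighted mixture of the two arm-specific laws, so its central subspace need not stand in the required relation to the potential-outcome subspaces; your patch rests on the claim that ``the central subspace of a mixture contains those of its components,'' which you assert rather than prove. Moreover, even granting that $\mathrm{Span}(\beta)$ contains $S_{Y(0)|X}$ and $S_{Y(1)|X}$ separately, that yields only the two marginal independences $Y(t)\Vbar X\mid X^{T}\beta$; the identity $E[e(X)\mid Y(0),Y(1),X^{T}\beta]=E[e(X)\mid X^{T}\beta]$ requires the joint version, and the dependence structure of $(Y(0),Y(1))$ given $X$ may involve directions of $X$ outside $\mathrm{Span}(\beta)$. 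Closing the theorem requires either strengthening the hypothesis to a subspace that is central for the joint law of $(Y(0),Y(1))$ given $X$, or restricting the conclusion accordingly. Your diagnosis of where the difficulty lives is exactly right; your proposed resolution does not yet close it.
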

	\begin{proof}
		Under the pretreatment variable assumption, $X$ has not had a descendant node of either $W$ or $Y$. Under the ignorability assumption, the causal effect of $T$ on $Y$ can be calculated unbiasedly based on confounding adjustment or adjusting for a balance score. The existence of a central subspace $S_{Y|Z}$ ensures that there is an arbitrary basis matrix $\beta \in \mathbb{R}^{p\times k}$ such that $Y\Vbar X| X^{T} \beta$ holds according to the invariant property of central subspace~\cite{cook1996graphics,ghosh2020sufficient}. Mathematically, $X^{T} \beta\cong X$ holds. Moreover, the unconfoundedness assumption, i.e. $Y(1), Y(0)\Vbar T|X$ holds. So replacing $X$ in $Y(1),Y(0)\Vbar T|X$ with $X^{T} \beta$, we have $Y(1), Y(0)\Vbar T|X^{T} \beta$. Therefore, $X^{T} \beta$ is a balance score according to the invariant property of central subspace and $Y(1), Y(0)\Vbar T|X^{T} \beta$.	
	\end{proof}
	
	Based on Theorem~\ref{theo:001}, $X^{T} \beta$ is a balance score for unbiased causal effect estimation from observational data. That means, it is sufficient to use  $X^{T} \beta$ as balance scores in the matching method for unbiased causal effect estimations. It is worth noting that another advantage of SDR is able to reduce the dimension of original covariates while retaining the important information.
	
	In this work, we adopt the inverse regression estimator (IRE) method for learning the central DRS since IRE belonging to the inverse regression (IR) method family is an optimal method with the highest asymptotic efficiency~\cite{cook2005sufficient}.
	
	When $Y$ is the continuous value, $Y$ is discretised with its range divided into $h$ slices based on the previous notation~\cite{cook2005sufficient}. The central subspace $\beta$ can be obtained by calculating the following formula.
	\begin{equation}
		\beta_{\xi_{y}} = {\textstyle \sum_{y=1}^{h}}Span(\xi _{y}) 
	\end{equation}
	\noindent where $\xi_{y}=\Sigma ^{-1}(E(X|Y=y)-E(X))$.
	
	Furthermore, we assume that the linearity condition for estimating central subspace., i.e. $E(Z|P_{S_{Y|Z}}Z)=P_{S_{Y|Z} }Z$ is induced, based on which the central subspace is linked to the inverse regression of $Z$ on $Y$.

	\subsection{implementation of MIRE}
	
	In this study, we use the inverse regression estimator (IRE)~\cite{cook2005sufficient} to estimate the central DRS for our MIRE method. The IRE method is to estimate the DRS by minimising the objective function~\ref{(8)}. First, the following equation is used to calculate the quadratic discrepancy for the IRE method.
	
	\begin{equation}
		\begin{aligned}
			\label{(8)}
			F_{k}^{IRE} (S, C)=&(vec(\hat{\zeta})-vec(SC)) ^{T}\hat{\Gamma} _{\hat{\zeta}} ^{-1} \\
			&(vec(\hat{\zeta})-vec(SC))
		\end{aligned}
	\end{equation}
	\noindent where $\hat{\Gamma}_{\hat{\zeta}} ^{-1}  $ is a nonsingular covariance matrix. The columns of $S\in \mathbb{R}^{p\times k}$ represent a basis for Span($\xi$), and $C\in \mathbb{R}^{k\times (h-1)}$ represents the coordinates of $\xi$ relative to $S$. $\hat{\zeta  }$ satisfies $\hat{\zeta  }\equiv \beta \gamma D_{f}A $, where $\beta\in \mathbb{R}^{p\times k}$ is a basis of DRS. $A$ is a nonstochastic matrix satisfies $A^{T}A=I_{h-1}  $ and $A^{T}1_{h}=0 $. $\gamma$ is a vector such that $\xi=\beta \gamma$. $D_{f}$ is a diagonal matrix with the elements of the vector $f$ on the diagonal, where $\hat{\mathrm{f}}=(\hat{f}_{1}, ..., \hat{f}_{h})$.  $vec(\cdot)$ denotes the operator that constructs a vector from a matrix by stacking its columns and can be formalised as follows.
	
	\begin{equation}
		\begin{aligned}
			vec(C)=&[(I_{h-1}\otimes S^{T}) V_{n}(I_{h-1}\otimes S)  ]^{-1}\times \\
			&(I_{h-1}\otimes S^{T}) V_{n}vec(\hat{\xi } )
			\label{(9)}
		\end{aligned}
	\end{equation}
	\noindent where $\otimes$ is the Kronecker product, which is an operation on two matrices of arbitrary size resulting in a block matrix. $V_{n}$ is a positive-definite matrix and is equal to a consistent estimate $\hat{\Gamma} _{\hat{\xi } } ^{-1}$ of $\Gamma _{\hat{\zeta}} ^{-1}$.
	
	\begin{equation}
		\begin{aligned}
			\hat{S} _{d} = Q_{S_{(-d)} } [Q_{S_{(-d)} }(c_{d}^{T}\otimes I_{p}  )Q_{S_{(-d)} }]^{-} \times \\Q_{S_{(-d)} }(c_{d}^{T}\otimes I_{p})V_{n}\alpha _{d}  
			\label{(10)}
		\end{aligned}
	\end{equation}
\noindent where $\alpha _{d} = vec(\hat{\zeta} -S_{(-d)} C_{(-d)})$, $C_{d}$ is the $d$-th row of $C$, $C_{(-d)}$ consists of all but the $d$th row of $C$, and  $Q_{S_{(-d)} }$ projects onto the orthogonal complement of Span($S_{(-d)}$) in the usual inner product.
	
	The matching process is performed according to the above-mentioned matching steps, and the distance is measured by using the following Mahalanobis distance.
	\begin{equation}
	D_{ij}=(\psi(X)_{i}-\psi(X)_{j}) {}' \Sigma ^{-1} (\psi(X)_{i}-\psi(X)_{j})
	\label{eq:000332}
	\end{equation}

\noindent where $\Sigma$ is the covariance matrix of all the units. In practice, the observed data are used to calculate the covariance matrix.
	
	The MIRE method is as shown in Algorithm~\ref{alg:algorithm1}.
	
	\begin{algorithm}[t]
		\caption{MIRE}
		\label{alg:algorithm1}
		\KwIn{Observational data $O$ includes the set of covariates $X$, treatment $T$ and outcome $Y$.}
		\KwOut{paired datasets}  
		\BlankLine
		Initialise $S\gets (s_{1},...,s_{d})$ randomly
		
	    Calculate the least squares coefficient for $B$ using equation (\ref{(9)}) with fixed $S$  
		
		Assign $e \gets F_{d}(S, C)$ and $iter\gets 0$
		
		\For{$d\leftarrow 1$ \KwTo $k$}{
			$S = (s_{1},...,s_{k})$
			
			Find a new $s_{d} $ by equation \ref{(10)}
			
			$\hat{s}_{d}\gets\frac{\hat{s}_{d}}{\left \| \hat{s}_{d}  \right \|} $
			
			Update
			$S\gets (s_{1},...,\hat{s}_{d},...,s_{_{k} })$
			
			$C\gets \arg_{c^{\ast } } \min F_{k}(S, C^{\ast } ) $
			
			$e \gets F_{k}(S, C); iter\gets iter+1  $
			
			\If {$e$ no longer decreases}{
				
				$\tilde{S} \gets S$
			} 
			
			$\tilde{S} \gets S$
			
		}
		Calculate an ordered basis for Span($\widetilde{B}$) (i.e. $\hat{b}_{1}$, $\hat{b}_{2}$)
		
		$X^{\ast} = X(\hat{b }_{1}, \hat{b }_{2}  )$
		
		\For{$i\leftarrow 1$ \KwTo $n{_{1}}$}{
			\For{$j\leftarrow 1$ \KwTo $n{_{2}}$}{
				
				$D_{ij}=(X^{\ast}_{i} -X^{\ast}_{j})' \Sigma ^{-1} (X^{\ast}_{i} - X^{\ast}_{j})$}
			Find $j$ with the smallest $D_{ij}$
			
			Pair the units $i, j$.
		}
	\end{algorithm}
	
	After matching, the standardised difference in the mean of the covariate balance. Based on the matched data, the causal effect can be obtained. The average causal effects can be estimated by using the formula~\ref{(8)}. The individual causal effects of the $i$-th unit can be estimated according to the equation $\hat{Y_{i}}(1)-\hat{Y_{i}}(0)$. 

\section{Experiment}
\label{sec:exp}
	It is very difficult to evaluate the proposed causal effect estimation methods by using the real-world datasets because we cannot know the counterfactual outcomes in the real-world datasets~\cite{rubin1979using,imbens2015causal}. To evaluate the performance of the proposed MIRE method, we select two semi-synthetic datasets, IHDP~\cite{hill2011bayesian} and TWINS~\cite{almond2005costs}. Both semi-synthetic datasets are widely used for evaluating causal  effect estimation methods. Moreover, one real-world dataset, Jobs~\cite{lalonde1986evaluating} is also used in our experiments since the dataset have the empirical causal effects in the literature.

	To evaluate the performance of the proposed MIRE method, seven commonly used causal effect estimation methods were selected for comparison, including: NNM (Nearest neighbor matching~\cite{rubin1973matching}), PSM (propensity score matching~\cite{rosenbaum1983central}), BART (Bayesian Additive Regression Trees~\cite{hill2011bayesian}), CF (Causal Forest~\cite{athey2019generalized}), SDRM (Sufficient dimension reduction matching~\cite{luo2019matching}), BCF (Bayesian Causal Forest~\cite{hahn2020bayesian}), R-LASSO (R-learner using LASSO Regression~\cite{nie2021quasi}). These methods have been regarded as one of the most efficient causal effect estimation methods as their ability of eliminating confounding bias, i.e. the state-of-the-art method in causal effect estimation method.

	For the experiments on the IHDP, we use Precision in Estimating Heterogeneous Treatment Effects (PEHE) $\mathrm{PEHE}=\frac{1}{N}  {\textstyle \sum_{i=1}^{N}} ((y_{i1}-y_{i0})-(\hat{y}_{i1}-\hat{y}_{i0}))^{2}$ as an evaluation criterion for assessing the heterogeneous causal effects. For experiments on the real dataset Jobs, we estimate the average causal effect on the treated samples (ATT) since the empirical ATT is known~\cite{imai2014covariate}. For experiments on TWINS, we estimated the average causal effect (ATE). In addition, we use root-mean-square error (RMSE) $\mathrm{RMSE}\!=\sqrt{\frac{1}{{~N}} \sum_{{i}=1}^{{N}}\left({y}_{{i}}\!-\hat{{y}}_{{i}}\right)^{2}}$ and standard deviation (SD) as evaluation metrics for assessing the performance of all methods.
	
	\subsection{Estimation of heterogeneous effects based on IHDP}
	The benchmark dataset IHDP in causal inference is from a simulation study conducted in the work~\cite{hill2011bayesian}. The data is from the Infant Health and Development Program (IHDP), a program that began in 1985 to provide high-quality home visiting services to low birth weight preterm infants. The results of the program showed that after treatment (i.e., after receiving the service), there was a significant increase in cognitive test scores in the treatment group compared to the control group at the age of $3$.
	
	A variety of covariates was collected in this study such as child birth weight, head circumference, weeks of prematurity, birth order,  and neonatal health indicators, as well as maternal behaviour during pregnancy and some indicators during delivery. To simulate the imbalance between the treatment and control groups, we discard the non-random portion of the treatment group from the experimental data as suggested by Hill~\cite{hill2011bayesian}, specifically all children of non-white mothers, while leaving the control group intact. The potential outcomes for each unit were  then simulated by creating response surfaces so that true individual causal effects could be concluded. And because the response surface is known, the covariates that generate the response surface can be adjusted to satisfy the Ignorability assumption.
	We follow the response surface B used by Hill~\cite{hill2011bayesian}: 
	$$Y(0) \sim  N(exp((X+W)\beta _{B} ), 1)$$
	$$Y(1) \sim  N(X\beta _{B} -\omega _{B} ,1)$$
	\noindent where $W$ is an offset matrix with the same dimension as X with every value equal to 0.5, $\beta _{B}$ is a vector of regression coefficients (0, 0.1, 0.2, 0.3, 0.4) randomly sampled with probabilities (0.5, 0.125, 0.125, 0.125, 0.125) for the 6 continuous covariates and (0.6, 0.1, 0.1, 0.1, 0.1) for the 18 binary covariates, $\omega_{B}$ is an offset chosen to guarantee that ATT = 4. Therefore, the true individual causal effect can be calculated at this point. 
	
	Figure \ref{fig1} shows the results of 1,000 simulations of this data using different methods. For each method, we calculate the average of the 1,000 results as the PEHE value. From this figure, it can be seen clearly that MIRE has the best performance among all the compared methods, which also shows the effectiveness of the method.  The second good performance of methods is R-LASSO and SDRM. Meanwhile, the widely used method, CF (Causal Forest) also has a good performance but is worse than MIRE, R-LASSO and SDRM. The rest methods, NNM (Nearest neighbour matching), PSM (propensity score matching), BART (Bayesian Additive Regression Trees) and BCF (Bayesian Causal Forest) have worse performance than MIRE, R-LASSO, SDRM and CF.
	
	\begin{figure}[htbp]
		\centerline{\includegraphics[scale=0.6]{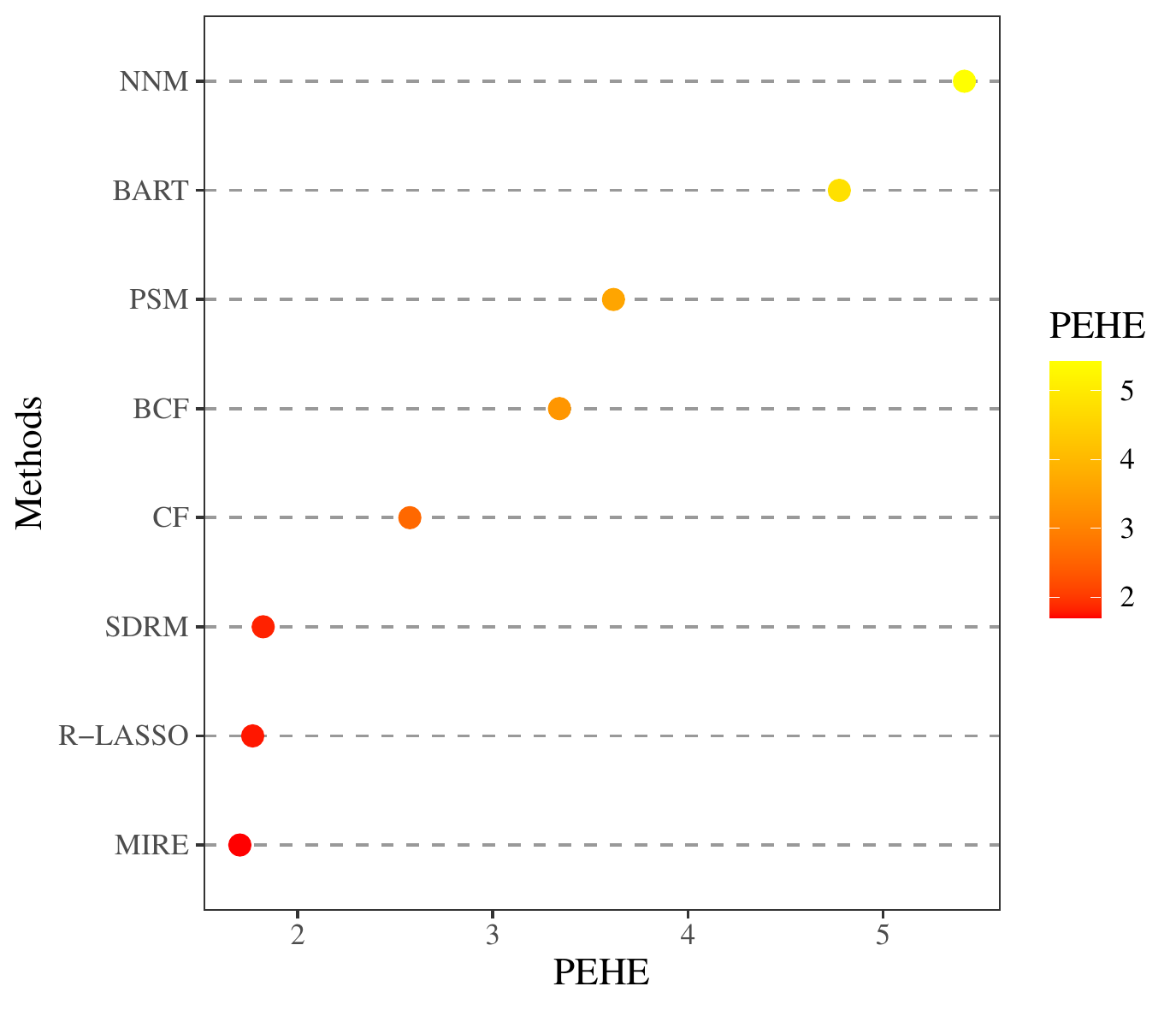}}
		\caption{PEHE values estimated using different methods in the IHDP simulation experiment.}
		\label{fig1}
	\end{figure}
	
	\subsection{Estimation of causal effects}
	\subsubsection{Jobs}
	The Jobs dataset is a classic dataset established by LaLonde in 1986 for causal inference~\cite{lalonde1986evaluating}. The dataset was derived from a temporary employment program aimed to provide work skills to people who are faced with economic hardship or lack job skills. The treatment is whether an individual participates in the program, and the potential outcome is the individual income in the year 1978. Covariates mainly included education, age, race, marital status, income in 1974, and income in 1975.  The average causal effect on the treated samples in the Lalonde dataset was estimated as \$886 with a standard error of \$448, and we used this estimate (\$886) as a criterion to evaluate our method \cite{lalonde1986evaluating,diamond2013genetic,imai2014covariate,cheng2022sufficient}.

	Table \ref{table3} shows the results of estimating the average causal effect on the treated samples on the Jobs dataset using different methods. From table \ref{table3}, it can be seen that MIRE has a good performance for ATT estimation, and the estimated ATT (\$519.09) is close to the criterion (\$886) with a small SD (\$734.93) that is also close to the empirical SD (\$448). The two methods BART and BCF also show a good performance on this dataset. While the CF estimated ATT has a larger difference from the standard value.
	
	 Figure \ref{fig4} shows the scatter plot after dimension reduction of our MIRE on the jobs dataset, which shows the relationship between the reduced variables and the response variable $Y$. From Figure \ref{fig4}, it can be seen that both covariates after dimension reduction are significantly correlated with the response variable $Y$. The result also shows the rationality of using dimension reduction covariates for matching.

	\begin{table}[!ht]
		\centering
		\caption{Estimated average causal effect on the treated samples on Jobs dataset.}
		\begin{tabular}{|c|c|c|c|c|}
			\hline
			\textbf{Methods}  & \textbf{Estimated ATT} & \textbf{RMSE} & \textbf{SD} \\ 
			\hline
			NNM & 198.16 & 683.34 & 1,280.70 \\ \hline
			PSM & 1,933.40 & 1,090.60 & 702.20 \\ \hline
			BART & 931.30 & 2182.02 & 1,032.00 \\ \hline
			CF & 182.24 & 497.65 & 890.61 \\ \hline
			SDRM & 1,740.98 & 1,025.65 & 710.55 \\ \hline
			R-LASSO & 1,271.63 & 385.63 & 825.23 \\ \hline
			BCF & 697.88 & 544.90 & 511.82 \\ \hline
			MIRE & 519.09 & 734.92 & 734.93 \\ \hline
		\end{tabular}
		\label{table3}
	\end{table}
	
	\begin{figure}[htbp]
		\centerline{\includegraphics[scale=0.6]{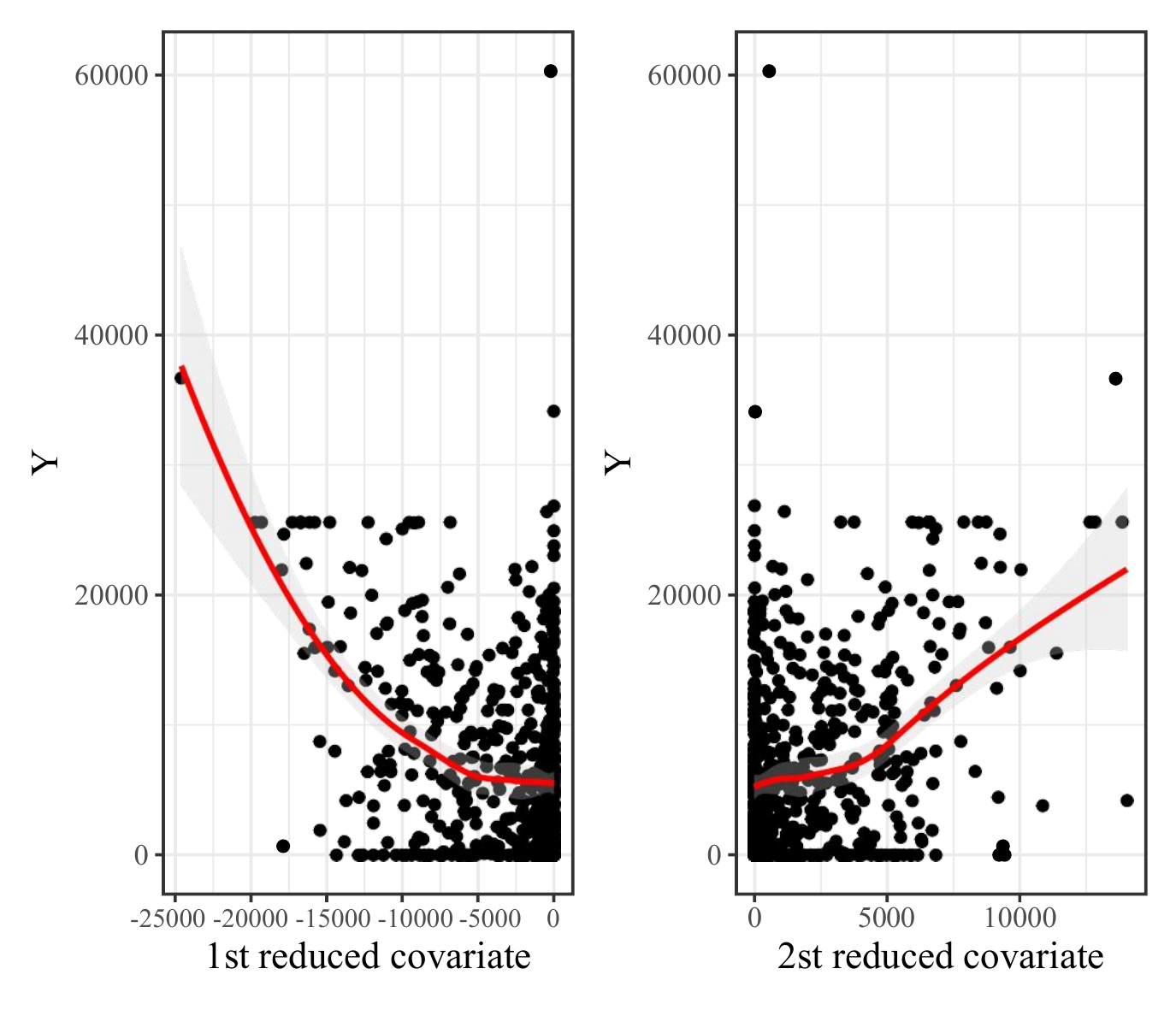}}
		\caption{The relationships between reduced covariates and response variable $Y$ by our MIRE on Jobs dataset.}
		\label{fig4}
	\end{figure}
	
	\subsubsection{TWINS}
	The TWINS benchmark dataset was created based on the data of twins born in the United States between 1983 and 2000, the treatment refers to the heavier weight of the twins at birth~\cite{almond2005costs}. The potential outcome is the mortality in the first year of twin birth. In Louizos et al.’s study~\cite{louizos2017causal}, one of the two twins was selectively hidden, which was equivalent to randomly assigning the treatment, thus making it similar to the data of randomdised experiment. Then those twins with weight less than 2kg were selected to establish a dataset. The dataset included 40 covariates such as parents' education, marital status, race, and mother's condition at the time of delivery. We simulated the presence of confounding factors according to the following formula~\cite{louizos2017causal}:
	$W_{i} |X_{i}\sim  Bern(Sigmiod(w{}'X_{i})+n)$, where $W\sim U(-0.1,0.1)^{40\times 1}, n\sim N(0,0.1)$.
	
	The average causal effect of our established dataset is $-0.025$. Table \ref{table4} shows the results of estimating the average causal effect on the TWINS dataset using different methods. It can be seen from the results shown in table \ref{table4}: MIRE, CF, BCF, PSM both have really well performance. This also shows that in the existence of confounders, MIRE is able to remove the confounding bias as the stat-of-the-art causal effect estimators.
	
	 Figure \ref{fig5} shows the scatter plot after dimension reduction of the covariates on the TWINS dataset, which shows the relationship between the reduced variables and the response variable $Y$. From figure \ref{fig5}, we have that the covariates after dimension reduction are significantly correlated with the response variables. It also confirms that MIRE is effective for estimating causal effects from observational data.

	\begin{table}[htbp]
		\begin{center}
			\caption{Estimated average causal effect  on TWINS dataset.}
			\begin{tabular}{|c|c|c|c|c|}
				\hline
				\textbf{Methods}  & \textbf{Estimated ATE} & \textbf{RMSE} & \textbf{SD} \\ 
				\hline
				NNM & -0.0218 & 0.0029 & 0.0468 \\ \hline
				PSM & -0.0252 & 0.0043 & 0.0502 \\ \hline
				BART & -0.1794 & 0.2793 & 0.3188 \\ \hline
				CF & -0.0252 & 0.0034 & 0.0408 \\ \hline
				SDRM & -0.0230 & 0.0032 & 0.0480 \\ \hline
				R-LASSO & -0.0623 & 0.0221 & 0.0901 \\ \hline
				BCF & -0.0249 & 0.0398 & 0.0639 \\ \hline
				MIRE & -0.0252  & 0.0037 & 0.0002 \\ \hline
			\end{tabular}
			\label{table4}
		\end{center}
	\end{table}
	
	\begin{figure}[htbp]
		\centerline{\includegraphics[scale=0.6]{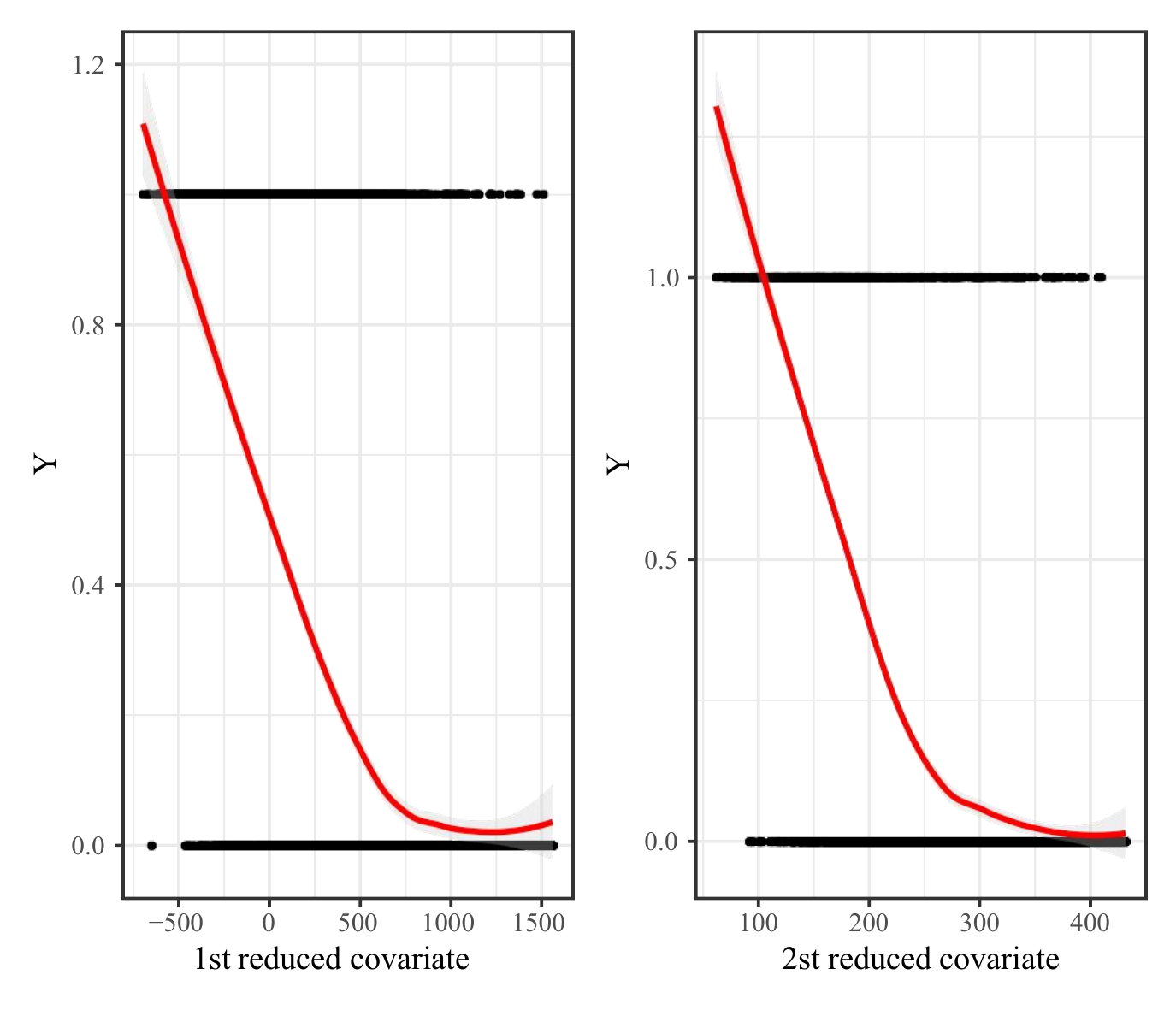}}
		\caption{The relationships between reduced covariates and response variable $Y$ after our MIRE on TWINS dataset.}
		\label{fig5}
	\end{figure}

	\section{Related Work}
	The potential outcome model is widely used in causal inference~\cite{rubin1973matching,rubin1974estimating,rubin2007design,imbens2015causal}. Our proposed MIRE method builds on the potential outcome model with mild assumptions, such as the assumptions of the pretreatment variables and the ignorability assumption. The matching method is one of the most commonly used methods in causal inference~\cite{rubin1973matching,luo2019matching,cheng2022sufficient}. Stuart summarises the principles and steps of the matching method as well as its application in many practical problems~\cite{stuart2010matching}. In the following, we review some works that are closely related to our proposed MIRE method.   
	
	It is difficult to find individuals with multiple dimensions of the covariates between two groups (control and treatment) during the matching process. To solve this problem, Rubin and Rosenbaum~\cite{rubin1973matching} introduced a propensity score, defined as the conditional probability that a unit is assigned to a certain treatment under the covariate condition. In addition, the propensity score has been proven to be a balance score~\cite{imbens2015causal}.  The most commonly used propensity score matching method is to use the propensity score in place of the original covariate in matching progress. In addition, propensity score matching and Mahalanobis distance matching are combined to generate GenMatch, and GenMatch uses a genetic search algorithm to obtain weights so as to complete matching \cite{stuart2010matching}. Luo and Zhu used a sufficient dimension reduction (SDR) method to reduce the dimension of the covariates in the treatment group and control group, and then matched them based on Mahalanobis distance \cite{luo2017estimating}.
	
	The most related work to MIRE is the SDR matching proposed by the works~\cite{luo2019matching,cheng2022sufficient}. Luo and Zhu's work~\cite{luo2019matching} consider reducing the sub-datasets over the treated units and the control units to obtain two of the reduced-dimensional covariates as the balance score for matching. The proposed method maybe suffers from bias since dividing the whole samples into two sub-datasets results in data insufficiency. Cheng et al.~\cite{cheng2022sufficient} aim to estimate the average causal effect from observational data, but not for heterogeneity causal effect estimation. In contrast, our theoretical findings support a data-driven method for heterogeneity causal effect estimation.

	In recent, a large number of deep learning-based methods have been proposed for estimating the causal
	effects from observational data~\cite{shalit2017estimating,yao2018representation,yoon2018ganite,shortreed2017outcome,kallus2019interval}. 
	The main advantage of deep learning-based methods is that the complex nonlinear relationships between variables can be learned by neural networks and the high-dimensional datasets can be addressed very well. Nevertheless, a number of parameters turning are very inefficient, and they do have not good interpretability.
	
	Another line work on causal effect estimation from data with latent confounders~\cite{kallus2019interval,cheng2020causal,cheng2022toward}. When an instrumental variable (IV) is given, the causal effect of $T$ on $Y$ can be calculated unbiasedly from data with
	latent variable too~\cite{athey2019generalized,cheng2022ancestral,cheng2022discovering,hernan2006instruments,martens2006instrumental}. Because IV-based estimators do not rely on the ignorability assumption, they are not directly related to our MIRE method.
	
	\section{Conclusion}
	In this work, we prove that the central DRS by a sufficient dimensional reduction method is a balance score and is sufficient to control for confounding bias in causal effect estimation from observational data.  Our findings provide theoretical support for using the dimension-reduced covariates for matching.  Under the proposed theorem, we propose a data-driven method, i.e. MIRE,  to estimate the causal effects from observational data under mild assumptions. Firstly, MIRE utilises the inverse regression estimator to reduce the dimensions of the original covariates, and then uses the reduced-dimensional covariates for matching. The advantages of our proposed MIRE have been verified through the experiments. First, the results of average causal effect estimation based on the Jobs dataset showed that the estimation results of our method were closer to the criterion value (\$886) recommended in the previous study than those of other matching methods. Second, the results of individual causal effect estimation showed that matching based on dimension-reduced covariates made it easier for individuals to be paired with another group in the matching process. Our method displayed great advantages in estimating individual causal effects over other matching methods. The estimation results based on the IHDP dataset indicated that our method could match more individuals during the matching process. Compared with other causal inference methods, our method also exhibited certain advantages in heterogeneous effect estimation accuracy (expressed as PEHE) and confounding factor control.
	

\section*{Acknowledgment}
This research project was supported in part by  the Major Project of Hubei Hongshan Laboratory under Grant 2022HSZD031, and in part by the Innovation fund of Chinese Marine Defense Technology Innovation Center under Grant JJ-2021-722-04, and in part by the National Natural Science Foundation of China under Grant Nos. 62076041 and 61806027, and in part by the Fundamental Research Funds for the Chinese Central Universities under Grant 2662020XXQD01, 2662022JC004, and in part by the open funds of State Key Laboratory of Hybrid Rice, Wuhan University, and in part by the open funds of the National Key Laboratory of Crop Genetic Improvement under Grant ZK202203, Huzhong Agricultural University.
	
\bibliographystyle{IEEEtran}
\bibliography{HPCCbib}

\end{document}